\newtheorem{theorem}{Theorem}
\newtheorem{lemma}[theorem]{Lemma}
\newtheorem{proposition}[theorem]{Proposition}
\newtheorem*{theoremMain}{Main Theorem}
\DeclareMathOperator{\p}{{\chi}}
\DeclareMathOperator{\dist}{d}
\newcommand\enc{\mathop{\mathrm{Enc}}}
\newcommand\enco{\mathop{\mathrm{Enc_0}}}
\newcommand \allp{\Sigma}
\newcommand \oddp{\Omega}
\newcommand \conp{\Gamma}
\newcommand \balp{{\mathcal B}}
\newcommand \dimp{\Upsilon}
\numberwithin{equation}{section}
\begin{document}

\title{Generalized Gray codes with prescribed ends of small dimensions
}
\thanks{Support from the Czech Science Foundation grant GA14-10799S is gracefully acknowledged}
\dedicatory{V\'aclav Koubek passed away before this work was completed. \\
The remaining  author deeply regrets this loss.}

\author{  Tom\'a\v{s} Dvo\v{r}\'ak}
\address{Faculty of Mathematics and Physics, Charles University, Prague, Czech Republic}
\author{V\'aclav Koubek}

\begin{abstract}
Given pairwise distinct vertices $\{\alpha_i , \beta_i\}^k_{i=1}$ of the $n$-dimensional hypercube $Q_n$ such that the distance $\dist(\alpha_i,\beta_i)$ is odd, are there paths $P_i$ between $\alpha_i$ and $\beta_i$ such that $\{V (P_i)\}^k_{i=1}$ partitions $V(Q_n)$? A~positive solution for every $n\ge1$ and $k=1$ is known as a Gray code of dimension $n$. 
In this paper we settle this problem for small values of $n$. As a corollary, we obtain that such spanning paths with prescribed endvertices exist for every $k<n$ unless $n = k + 1 = 4$ and the subgraph of $Q_4$ induced by $\{u_i , v_i\}^3_{i=1}$ forms a cycle of length six. This result is optimal in the sense that for every $n>2$ there is an instance of this problem with $n$ prescribed pairs for which a positive solution does not exists.  
\end{abstract}

\keywords{Gray code, Hamiltonian path, hypercube, disjoint path cover, prescribed endvertices}

\subjclass[2000]{05C38, 05C45, 05C70, 68R10}

\maketitle
\section{Introduction}
The $n$-dimensional hypercube $Q_n$ is an undirected graph whose vertex set consists of all $n$-bit strings, two vertices being joined by an edge whenever the corresponding strings differ in a single bit. A~systematic study of this class of graphs was initiated in early seventies of the last century by a series of papers written by Czech researchers \cite{H,HaMo72,Ne74}.
An important property of hypercubes is the fact that $Q_n$ is Hamiltonian for every $n\ge2$. It should be noted that in the literature, Hamiltonian paths and cycles in hypercubes are often disguised as Gray codes or cyclic Gray codes, respectively, that are popular due to their numerous applications in computer science \cite{Knuth,Sa}. 

It is well-known that  given two vertices $\alpha,\beta$ of $Q_n$, a~Hamiltonian path between $\alpha$ and $\beta$ exists iff the distance $d(\alpha,\beta)$ of $\alpha$ and $\beta$ is odd \cite{H}. Caha and Koubek in \cite{CK} suggested the  following generalization: a~set $\{\alpha_i ,\beta_i\}_{i=1}^k$ of pairwise disjoint pairs of vertices of $Q_n$ is called \emph{connectable} if there are paths $P_{\alpha_i\beta_i}$ between $\alpha_i$ to $\beta_i$ for all $i=1,\dots,k$ such that
each vertex of $Q_n$ occurs in exactly one of these paths.
In \cite{CK}  they showed that $\{\alpha_i,\beta_i\}_{i=1}^k$ with $d(\alpha_i,\beta_i)$ odd is connectable for every $n\ge2$ provided that $k\le(n-1)/3$, and asked about the maximum value of $k$ for which this statement holds.  

The connectability of $\{\alpha_i,\beta_i\}_{i=1}^k$ for $k=2$ follows from earlier works \cite{CK,D,LW}, while instances of the problem for $k$ bounded by a small constant were systematically studied by Casta\~neda and Gotchev  \cite{CG,CGL}. The authors of this paper and Gregor showed that when $k$ is unbounded, the problem becomes NP-hard or NP-complete, depending on the instance description \cite{DG, DK}. Moreover, in \cite{DGK17} they characterized the connectability of  $\{\alpha_i,\beta_i\}_{i=1}^k$ for $k<n$, but the proof is computer-assisted: the connectability of instances for $n\le5$ was verified by a computer search.    

The purpose of this paper is to settle the connectability problem for small values of $n$ without computer assistance. As a corollary, we obtain a computer-free proof of the following result \cite{DGK17}. 
\begin{theoremMain}
Let $A=\{\alpha_i,\beta_i\}_{i=1}^k\subseteq V(Q_n)$,  $d(\alpha_i,\beta_i)$ be odd for every $i=1,2,\dots,k$, and $k<n>0$.
Then $A$ is connectable unless $n=4$ and the subgraph of $Q_4$ induced by $A$ forms a 6-cycle.
\end{theoremMain}
As observed in \cite{DGK05}, the above result is optimal in the following sense: for every $n>2$ there are $n$ pairwise disjoint pairs $\alpha_i,\beta_i\in V(Q_n)$ with $d(\alpha_i,\beta_i)$ odd for all $i$ which are not  connectable.  Indeed, let $\{\alpha_i\}_{i=1}^n$ be all $n$ neighbors of a vertex $\alpha$ of $Q_n$ and $\{\beta_1,\dots,\beta_n\}\subseteq V(Q_n)\setminus\{\alpha_1,\dots,\alpha_n,\alpha\}$ be arbitrary such that $\dist(\alpha_i,\beta_i)$ is odd for all $i\in\{1,2,\dots,n\}$. If $\{\alpha_i,\beta_i\}_{i=1}^n$ were connectable, then $\alpha$ would be included in a path $P_{\alpha_i\beta_i}$between $\alpha_i$ and $\beta_i$ for some $i$. But this sequence has to pass through some $\alpha_j$, $j\ne i$, which means that $\alpha_j$ is included in both $P_{\alpha_i\beta_i}$ and $P_{\alpha_j\beta_j}$, contrary to  our assumption.


\section{Concepts and notation}
The terminology and notation used in this paper mostly follows \cite{DGK17}. In the rest of this text, $n$ always denotes a positive integer while 
$[n]$ stands for the set $\{0,1,\dots,n-1\}$.

\subsection{Hypercubes}
Let $V_n$ be the set of all $n$-bit strings.   
For $\alpha\in V_n$ and $i\in[n]$, let $\alpha (i)$ denote the $i$-th bit of $\alpha$.
For strings $\alpha ,\beta
\in V_n$, $\alpha\oplus\beta$ denotes the bitwise addition modulo 2 while $\Delta(\alpha,\beta)=\{i\in[n]\mid \alpha(i)\ne\beta(i)\}$.
Let $e^n_i\in V_n$ for $i\in[n]$ be the strings such that
$$e^n_i(j)=\begin{cases} 1&\text{ if }i=j,\\
0&\text{ if }i\ne j.\end{cases} $$
If $n$ is clear from the context, then we omit $n$ and write only $e_i$.

A \emph{hypercube} $Q_n$ of dimension $n$ is a graph with vertex set $V_n$, $ \alpha\beta $ being an edge  whenever $\alpha\oplus\beta =e_i$ for some $i\in[n]$.
The set of all edges of $Q_n$ is denoted by $E(Q_n)$. We use $\dist(\alpha,\beta)$ to denote the distance of $\alpha,\beta\in V(Q_n)$. Note that $\dist(\alpha,\beta)=|\Delta(\alpha,\beta)|$. Vertices $\alpha,\beta$ of $Q_n$ are called \emph{diametrical} if $\dist(\alpha,\beta)=n$.

In the following, we often need to transform $n$-bit strings into  those of lengths $n-1$ or $n+1$. For this purpose, we employ the following operations. For $i\in[n]$ and $\alpha\in V_n$ such that $\alpha (i)=k$ we define $\rho_{i=k}(\alpha )\in V_{n-1}$  as follows:
$$\rho_{i={k}}(\alpha )(j)=\begin{cases} \alpha (j)&\text{ if }j<i,\\
\alpha (j+1)&\text{ if }j\ge i\end{cases} $$
while $\rho_{i=k}(\alpha)$ is undefined if $\alpha(i)\ne k$.
Given a set $X\subseteq V_n$, $i\in[n]$ and $k\in[2]$, set $\rho_{i=k}(X)=\{\rho_{i=k}(\alpha )\mid\alpha\in X\}.$
Conversely, for a string  $\alpha\in V_{n-1}$, $i\in[n]$ and $k\in[2]$, let $\iota_{i=k}(\alpha )$ be a string of $V_n$ such that
$$\iota_{i=k}(\alpha )(j)=\begin{cases} \alpha (j)&\text{ if }j<i,\\
k&\text{ if }j=i,\\
\alpha (j-1)&\text{ if }j>i.\end{cases} $$
For $\alpha\in V_n$ we define the \emph{parity} $\chi(\alpha)$ of $\alpha$ by $\chi (\alpha)=\prod_{i\in[n]}-1^{\alpha(i)}$.
\par
\subsection{Pair-sets}
Let $P_2(V_n)=\{\alpha \beta\mid\alpha ,\beta\in V_n\}$ be the edge set of the complete graph with loops over the vertex set $V_n$. Note that elements of $P_2(V_n)$ --- which we call \emph{pairs} --- may consist of either two different or two identical elements of  $V_n$, both cases are needed throughout the paper.
For $ \alpha\beta \in A\subseteq P_2(V_n)$ let $\chi(\alpha\beta)=\chi(\alpha)+\chi(\beta)$ and  $\chi(A)=\sum_{ \alpha\beta\in A}\chi(\alpha\beta)$.
Note that $\chi(\alpha\alpha)$ equals $2$ or $-2$ depending on whether $\sum_{i\in[n]}\alpha(i)$ is even or odd.
We say that a pair $ \alpha\beta \in P_2(V_n)$ is
\begin{enumerate}[]
\item \emph{odd} if $\chi(\alpha)\ne\chi(\beta)$,
\item \emph{even} if $\chi(\alpha)=\chi(\beta)$,
\item \emph{degenerate} if $\alpha =\beta$,
\item \emph{edge}-\emph{pair} if $ \alpha\beta \in E(Q_n)$.
\end{enumerate}
A subset $A\subseteq P_2(V_n)$ is called a \emph{pair}-\emph{set} in $Q_n$ if
\begin{itemize}
\item $\{\alpha ,\beta \}\cap \{\alpha',\beta'\}=\emptyset$ for all distinct $
 \alpha\beta ,\alpha'\beta'\in A$, and
\item if $A\ne\emptyset$ then there exists $ \alpha\beta \in A$ with $\alpha\ne\beta$.
\end{itemize}
Let $\allp_n$ be the set of all pair-sets in $Q_n$.  Given $A\in\allp_n$,
\begin{itemize}
\item $|A|$ denotes the size of $A$,
\item $\|A\|$ denotes the number of odd pairs in $A$,
\item $\bigcup A=\bigcup_{ \alpha\beta \in A}\{\alpha ,\beta \}$.
\end{itemize}
For a positive integer $k$, let $\allp_n^k$ be the set of all pair-sets $A\in\allp_n$ with $|A|\le k$.

We say that a vertex $\alpha\in V_n$ is \emph{encompassed} by a set
$X\subseteq V_n$ if for every edge $ \alpha\beta \in E(Q_n)$ we have $
\beta\in X$. Let $\enco(X)$ be the set of all vertices encompassed by $X$ and $\enc(X)=\enco(X)\setminus X$. Given a pair-set $A$, we set $\enco(A)=\enco(\bigcup A)$, $\enc(A)=\enco(A)\setminus\bigcup A$ and say that a vertex $\alpha$ is  \emph{encompassed by a pair-set} $A$ if $\alpha$ is encompassed by $\bigcup A$.

For a pair-set $A\in\allp_n$ and $i\in[n]$, let $\sigma_i(A)=(n_0,n_1)$ where $n_k$ is the number of $ \alpha,\beta \in A$ with $\alpha (i)=\beta (i)=k$ for $k\in[2]$. Note that $|A|=n_0+n_1+|\{ \alpha,\beta \in A\mid\alpha
(i)\ne\beta (i)\}|$
where $\sigma_i(A)=(n_0,n_1)$. Further, for $k\in[2]$ define
$$\rho_{i=k}(A)=\{\{\rho_{i=k}(\alpha ),\rho_{i=k}(\beta )\}\mid
 \alpha,\beta \in A,\,\alpha (i)=\beta (i)=k\}.$$
Note that 
if $\rho_{i=k}(A)$ is non-empty and consists only of degenerate pairs, then 
$\rho_{i=k}(A)$ is not a pair-set.  
If $\sigma_i(A)=(n_0,n_1)$ and $\rho_{i=k}(A)$ for $k\in[2]$ is a pair-set, then $|\rho_{i=k}(A)|=n_k$.  Conversely, if $A_0,A_1\in\allp_{n-1}$ and $k\in[2]$, we define
$$\begin{aligned}\iota_{i,k}(A_0,A_1)=&\{\{\iota_{i=k}(\alpha ),\iota_{i=k}
(\beta )\}\mid  \alpha,\beta \in A_0\}\,\cup\\
&\{\{\iota_{i=1-k}(\alpha ),\iota_{i=1-k}(\beta )\}\mid \{\alpha
,\beta \}\in A_1\}.\end{aligned}$$
Note that $\iota_{i,k}(A_0,A_1)\in\allp_n$ and
$$\rho_{i=k}(\iota_{i,k}(A_0,A_1))=A_0\text{ while }\rho_{i=1-k}(\iota_{
i=k}(A_0,A_1))=A_1.$$
We say that a~pair-set
$A\in\allp_n$ is
\begin{enumerate}[]
\item \emph{odd} if every pair in $A$ is odd,
\item \emph{balanced} if $\chi (A)=0$,
\item \emph{pure} if $\alpha\ne\beta$ for all $ \alpha,\beta \in A$,
\item \emph{diminishable} if $A$ is odd and
\begin{itemize}
\item
either $|A|\le n-1$ and if $n=4$ then
\begin{itemize}
\item
either $A$ contains an edge pair
\item
or there exists no subset $\bigcup A\subseteq X\subseteq V_4$ such that the  subgraph of $Q_4$ induced by $X$ is isomorphic to $Q_3$,
\end{itemize}
\item
or $|A|=n$, $n\ne 4$, $A$ contains at least two edge-pairs, and $\enc(A)=\emptyset$.
\end{itemize}
\end{enumerate}
Let $\balp_n$, $\oddp_n$, $\dimp_n$ denote the sets of all balanced,  odd, diminishable pair-sets from $\allp_n$, respectively. 
Given a~positive integer $k$, put 
\[
\balp_n^k=\balp_n\cap\allp_n^k, \quad\oddp_n^k=\oddp_n\cap\allp_n^k,\quad\dimp_n^k=\dimp_n\cap\allp_n^k.
\]
Given $A\in\dimp_n$ such that $|A|=n$, we say that $i\in[n]$ is \emph{separating} for $A$ if there exist edge-pairs $ \alpha\beta ,\alpha'\beta'\in A$ with $\alpha (i)=\beta (i)\ne\alpha'(i)=\beta'(i)$.

\subsection{Connectability}
A~pair-set $A\in\allp_n$ is called
 \emph{connectable} if there exists a family $\{P_{\alpha\beta}\mid
 \alpha\beta \in A\}$ of
vertex-disjoint paths in $Q_n$ such that $P_{\alpha\beta}$ is a path between
$\alpha$ and $\beta$ and for every vertex $\gamma\in V_n$ there is exactly one $
 \alpha\beta \in A$ such
that the path $P_{\alpha\beta}$  passes through $\gamma$. Then the family $\{P_{\alpha\beta}\mid  \alpha,\beta \in A\}$
is called a \emph{connector} of $A$.
Observe that if $\{P_{\alpha\beta}\mid  \alpha,\beta \in A\}$ is a connector of a pair-set $A$ and $\alpha\alpha\in A$, then $P_{\alpha\alpha}$ is a singleton path consisting of $\alpha$ because  it is the unique path from $\alpha$ to $\alpha$. Let $\conp_n$ denote the set of all connectable pair-sets in $\allp_n$.

If there exists a vertex $\alpha\in\enc(A)$ for an odd pair-set $A$, then $A$ is not connectable, because 
any path of length $>1$ passing through $\alpha$ visits two distinct pairs of $A$. Note that for every $n\ge 3$ there exists such an odd pair set $A$ with $\enc(A)\ne\emptyset$ and therefore $\Omega_n^n\setminus \Gamma_n\ne\emptyset$ for $n\ge3$. This argument fails for diminishable pairs, because if $\alpha\in\enco(A)$ then $\alpha\in\bigcup A$. 

If $P$ is a path in $Q_n$ between $\alpha$ and $\beta$ and $\gamma ,\delta\in V_n$ are vertices that belong to $P$, then we say that $\gamma$ \emph{is} \emph{closer} \emph{to} $\alpha$ \emph{than} $\delta$ \emph{in} $P$ if the subpath of $P$ between $\alpha$ and $\gamma$ does not contain $\delta$.

For $A,B\in\allp_n$ we write $A\implies B$ if there exist
$ \alpha\beta,\alpha'\beta'\in A$ and $i\in[n]$ such that $\beta\oplus e_i=\beta'$  and $B=(A\setminus \{ \alpha\beta ,\alpha'\beta'\})\cup \{\alpha \alpha'\}$. The transitive and reflexive closure of $\implies$ is denoted by $\overset {*}{\implies}$.
For $i\in[n]$ we say that $A$ is 
\begin{enumerate}[]
\item $i$-\emph{complete} if $\alpha(i)=\beta(i)$ for every $ \alpha,\beta \in A$,
\item an $i$-\emph{completion} of $B$ if $A$ is $i$-complete and $A\overset{*}{\implies}B$.
\end{enumerate}

\section{Preliminaries}
The following lemma summarizes basic properties of notions introduced in the previous section. In particular, its last statement forms the cornerstone of our proof technique.
\begin{lemma}
\label{lemma-simple}
Let $n\ge 1$ be a natural number. Then
\begin{enumerate}[\upshape (1)]
\item \label{lemma-simple:1}
if $A\in\conp_n$ and $A\overset {*}{\implies}B$ then $B\in\conp_
n$;
\item \label{lemma-simple:2}
if $n>1$ and $A,B\in\conp_{n-1}$, then $\iota_{i,k}(A,B)\in\conp_n$ for every $i\in[n]$ and $k\in[2]$,
\item \label{lemma-simple:3}
if $A$ is balanced and $B\overset{*}{\implies}A$ then $B$ is balanced,
\item \label{lemma-simple:4}
if $A\in\conp_{n}$ then $A$ is balanced,
\item \label{lemma-simple:5}
if $A\in\oddp_n$ then $A$ is balanced,
\item \label{lemma-simple:6}
\label{lemma-simple:separate}
if $A\in\allp_n$ consists of two or three edge-pairs, then there exists $i\in[n]$ such that $i$ is separating for $A$ and $A$ is $i$-complete unless $A$ consists of three edge-pairs lying on a 6-cycle,
\item
\label{lemma-simple:completion}
If $B\in\allp_n$ is an $i$-completion of $A$ and $\rho_{i=0}(B),\rho_{i=1}(B)\in\conp_{n-1}$, then $A\in\conp_n$. 
\end{enumerate}
\end{lemma}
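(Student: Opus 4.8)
The plan is to reduce the statement to the two structural facts already established in \ref{lemma-simple:1} and \ref{lemma-simple:2}. Unwinding the definition, the hypothesis that $B$ is an $i$-completion of $A$ means precisely that $B$ is $i$-complete and $B\overset{*}{\implies}A$. Accordingly I would argue in two steps: first establish $B\in\conp_n$, and then transport connectability from $B$ to $A$ along the rewriting relation $\overset{*}{\implies}$.

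For the first step the crucial point is that $i$-completeness forbids any pair of $B$ from straddling the two subcubes $\{\gamma\in V_n\mid\gamma(i)=0\}$ and $\{\gamma\in V_n\mid\gamma(i)=1\}$: every $\alpha\beta\in B$ has $\alpha(i)=\beta(i)$ and so lies wholly in one of them. Consequently each pair of $B$ contributes to exactly one of the projections $\rho_{i=0}(B)$, $\rho_{i=1}(B)$, and reassembling these two projections recovers $B$ without loss, that is $B=\iota_{i,0}(\rho_{i=0}(B),\rho_{i=1}(B))$. This is verified directly from the definitions, using that $\iota_{i=k}$ inverts $\rho_{i=k}$ on strings whose $i$-th bit equals $k$. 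The hypotheses supply exactly the input required by \ref{lemma-simple:2}, namely $\rho_{i=0}(B),\rho_{i=1}(B)\in\conp_{n-1}$, so that statement gives $\iota_{i,0}(\rho_{i=0}(B),\rho_{i=1}(B))\in\conp_n$, i.e. $B\in\conp_n$. (Note that \ref{lemma-simple:2} presumes $n>1$; for $n=1$ the hypothesis $\rho_{i=0}(B),\rho_{i=1}(B)\in\conp_0$ cannot be met, so the statement is vacuous.)

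The second step is then immediate: since $B\in\conp_n$ and $B\overset{*}{\implies}A$, \ref{lemma-simple:1} yields $A\in\conp_n$. The only part of the argument carrying genuine content is the reconstruction $B=\iota_{i,0}(\rho_{i=0}(B),\rho_{i=1}(B))$, where $i$-completeness is essential to guarantee that the two subcube projections jointly account for every pair of $B$; the remaining bookkeeping, including the handling of any degenerate pairs under $\rho_{i=k}$, is routine, and I anticipate no serious obstacle beyond it.
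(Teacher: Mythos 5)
Your argument for item \eqref{lemma-simple:completion} is correct and is essentially the reasoning the paper compresses into the remark that this item is a ``straightforward corollary of the corresponding definitions'': $i$-completeness of $B$ gives the reconstruction $B=\iota_{i,0}(\rho_{i=0}(B),\rho_{i=1}(B))$, item \eqref{lemma-simple:2} then yields $B\in\conp_n$, and item \eqref{lemma-simple:1} transports connectability along $B\overset{*}{\implies}A$. The aside about $n=1$ being vacuous is harmless.

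The problem is coverage: the statement to be proved is the whole lemma, and your proposal addresses only item \eqref{lemma-simple:completion}, taking items \eqref{lemma-simple:1} and \eqref{lemma-simple:2} as ``already established''. Items \eqref{lemma-simple:1}--\eqref{lemma-simple:3} and \eqref{lemma-simple:5} are indeed routine, but the two items for which the paper supplies genuine arguments are left untouched. For item \eqref{lemma-simple:4} one needs the parity invariant: since $\chi(\gamma)+\chi(\gamma')=0$ for every edge $\gamma\gamma'$ of $Q_n$, each path $P_{\alpha\beta}$ of a connector satisfies $\chi(V(P_{\alpha\beta}))=\chi(\alpha\beta)$, whence $\chi(A)=\sum_{\gamma\in V(Q_n)}\chi(\gamma)=0$. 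For item \eqref{lemma-simple:separate} the paper argues by induction on $n$: writing $D$ for the set of directions spanned by the edge-pairs of $A$, either $D=[n]$, which forces $n=3$ and three edge-pairs on a $6$-cycle, or any $i\in[n]\setminus D$ makes $A$ $i$-complete; if that $i$ is not separating, all pairs lie in one $i$-subcube and the induction hypothesis supplies a suitable $i'$. Without these two arguments the lemma is not proved, so as a proof of the full statement the proposal has a genuine gap, even though the part it does treat is handled correctly and by the same route the paper intends.
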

\begin{proof}
Properties \eqref{lemma-simple:1}-\eqref{lemma-simple:3}, \eqref{lemma-simple:5} and \eqref{lemma-simple:completion} are straightforward corollaries of  the corresponding definitions.  

To verify \eqref{lemma-simple:4}, let $A\in\conp_n$ and $\mathcal P=\{P_{\alpha\beta}\mid \alpha\beta\in A\}$ be a connector of $A$. Observe that as $\p(\gamma\gamma')=0$  for every edge $\gamma\gamma'\in E(Q_n)$, it follows that for every path $P_{\alpha\beta}\in\mathcal P$ we have
$\p(V(P_{\alpha\beta}))=\p(\alpha\beta)$. Consequently,
\[
\p(\mathcal A)=\sum_{\alpha\beta\in A}\p(\alpha\beta)=\sum_{\alpha\beta\in A}\p(V(P_{\alpha\beta}))=\sum_{\gamma\in V(Q_n)}\p(\gamma)=0
\]
and therefore $A$ is balanced.

To prove \eqref{lemma-simple:separate}, we argue by induction on $n$. Since the case that $n\le2$ is obvious, assume that $n\ge3$ and put $D=\{\Delta(\alpha,\beta)\mid\alpha\beta\in A\}$. If $D=[n]$ then it must be the case that $n=3$ and $A$ consists of three edge-pairs lying on a 6-cycle. Otherwise select an arbitrary $i\in[n]\setminus D$ and note that then $A$ is $i$-complete. If $i$ is separating for $A$, we are done. Otherwise there is $k\in[2]$ such that $\alpha(i)=\beta(i)=k$ for all $\alpha\beta\in A$ and hence by the induction hypothesis, there exists $i'\in[n]\setminus\{i\}$ satisfying the conclusion of \eqref{lemma-simple:separate}.
\end{proof}

Next we recall several previous results on hypercubes that shall be useful later.

\begin{proposition}
\label{old-results-prop}
Let $n\ge 1$ be a natural number. Then
\begin{enumerate}[\upshape(1)]
\item{\rm\cite[Proposition 2.3]{H}}
\label{old-results-prop-part1}
a singleton pair-set $A$ belongs to $\conp_n$ if and only if $A$ is odd; \item{\rm\cite[Corollary 4]{LW}}
\label{old-results-prop-part2}
if $A=\{ \alpha\beta ,\gamma\gamma\}\in\balp_n$, then $A\in\conp_n$;
\item{\rm\cite[Lemma 3.3]{D}}
\label{old-results-prop-part3}
$\oddp_n^2\subseteq\conp_n$;
\item{\rm\cite[Lemma 3.3]{D}}
\label{old-results-prop-part3a}
if $\{ \alpha\beta ,\gamma \delta \}\in\oddp_n$ such that $\gamma \delta$ is an edge-pair, then $\{ \alpha\beta ,\gamma\gamma,\delta\delta\}\in\conp_n$ unless $n=3$,  $\alpha,\beta$ is an edge-pair and $d( \alpha\beta ,\gamma \delta)=2$;
\item{\rm\cite[Corollary 10]{CK}}
\label{old-results-prop-part5}
if $A\in\balp^2_n$ is pure with $||A||=0$ and $n\ge 4$, then $A\in\conp_n$;
\item{\rm\cite[Lemmas 12-14]{CK}}
\label{old-results-prop-part6}
if $A\in\oddp^3_n$ and $n\ge 5$, then $A\in\conp_n$;
\item{\rm\cite[Lemmas 3.6 and 3.11]{CG}, \cite[Theorem 6.1]{CGL}}
\label{old-results-prop-part7}
if $A\in\balp_n^3$
with  $n\ge 4$
and $\|A\|=1$, then $A\in\conp_n$;
\item{\rm\cite[Lemmas 3.13, 4.3, 5.6, and 5.7]{CG}, \cite[Theorems 3.1 and 4.1]{CGGL10}}
\label{old-results-prop-part8}
if $A\in\balp_n^4$ is not pure and $n\ge 5$,
then $A\in\conp_n$;
\item{\rm\cite[Lemma 5.6]{CG}}
\label{old-results-prop-part9}
 if $A\in\balp_n^4$, $\|A\|=2$, $n\ge4$, and $A$ has two degenerate pairs then $A\in\conp_n$;
\item{\rm\cite[Lemma 4.6]{CG}}
\label{old-results-prop-part10}
 if $A\in\balp_n^5$, $n\ge5$ and $A$ has four degenerate pairs then $A\in\conp_n$;
 \item{\rm\cite[Lemma~3.4]{DGK17}}
 \label{old-results-prop-part13}
 if $A$ is odd, $|A|=n-1$ and $n\ge4$, then there exists an $i$-completion of $A$ for some $i\in[4]$.
\end{enumerate}
\end{proposition}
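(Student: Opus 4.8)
The first ten items are classical facts quoted from the indicated references, so I would import them verbatim; only the final item \eqref{old-results-prop-part13} invites a self-contained argument from the machinery of Section~2, and it is this part that I plan to prove. Recall that a witness for \eqref{old-results-prop-part13} is an $i$-complete pair-set $B$ with $B\overset{*}{\implies}A$; by \eqref{lemma-simple:completion} such a $B$ is exactly what later lets one split $A$ along coordinate $i$ and induct, which is why the statement is phrased as an existence result. For $i\in[n]$ write $m_i=|\{\alpha\beta\in A\mid i\in\Delta(\alpha,\beta)\}|$ for the number of pairs that cross coordinate $i$. If $m_i=0$ for some $i$ then $A$ is already $i$-complete and $B=A$ works, so the interesting case is when every coordinate is crossed.

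The plan is to recover $B$ from $A$ by reversing the merge operation $\implies$ on the crossing pairs. Fix a coordinate $i$ and consider a pair $\alpha\beta\in A$ with $\alpha(i)=0\ne 1=\beta(i)$. I would replace it by the two pairs $\alpha\gamma$ and $\beta\delta$, where $\{\gamma,\delta\}$ is a direction-$i$ edge (so $\gamma\oplus e_i=\delta$, $\gamma(i)=0$, $\delta(i)=1$) whose endpoints are \emph{fresh}, i.e.\ lie outside $\bigcup A$ and outside the edges used for the other crossing pairs; pairs of $A$ that do not cross $i$ are left untouched. Taking $\gamma$ of parity opposite to $\alpha$ keeps both new pairs odd, so $B$ stays in $\oddp_n$. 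The resulting $B$ is $i$-complete and a genuine pair-set, because all the chosen fresh vertices are distinct and avoid $\bigcup A$; and merging each split pair back across coordinate $i$ (an application of $\implies$ with linking coordinate $i$, on a vertex set disjoint from the other splits) shows $B\overset{*}{\implies}A$. Thus the whole task reduces to finding a coordinate $i$ for which $m_i$ pairwise disjoint fresh direction-$i$ edges exist.

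Finally I would verify this feasibility by a counting bound. There are $2^{n-1}$ edges in direction $i$, and since $A$ is odd it has $n-1$ nondegenerate pairs, so $|\bigcup A|=2(n-1)$ and at most $2(n-1)$ of these edges meet $\bigcup A$; hence at least $2^{n-1}-2(n-1)$ of them are fresh, and it suffices to choose $i$ with $m_i\le 2^{n-1}-2(n-1)$. For $n\ge 5$ one has $2^{n-1}-2(n-1)\ge n-1\ge m_i$ for \emph{every} $i$, so any crossed coordinate will do. The delicate case, and the step I expect to be the real obstacle, is $n=4$, where the bound degrades to $2^{n-1}-2(n-1)=2$: here I would pick $i$ minimizing $m_i$ and note that $\sum_{i\in[4]}m_i=\sum_{\alpha\beta\in A}|\Delta(\alpha,\beta)|\le 3\cdot 3=9$ (each of the three odd pairs has $|\Delta(\alpha,\beta)|\in\{1,3\}$), whence $\min_i m_i\le 2$, matching the two fresh direction-$i$ edges guaranteed above (which, being distinct edges of a single direction, are automatically vertex-disjoint). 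This dimension-four tightness is where the argument has to be handled with care.
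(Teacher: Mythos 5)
Your handling of items (1)--(10) coincides with the paper's: they are imported by citation, and no proof is given or expected. For item (11) the paper likewise offers nothing beyond the reference to \cite[Lemma~3.4]{DGK17}, so your self-contained argument is an addition rather than a variant of anything in the text. Its core is sound: the direction-$i$ edges form a perfect matching of size $2^{n-1}$; an odd pair-set of size $n-1$ has no degenerate pairs, hence occupies exactly $2(n-1)$ vertices and blocks at most $2(n-1)$ of these edges; and splitting each $i$-crossing pair along a fresh direction-$i$ edge yields an $i$-complete pair-set $B$ with $B\overset{*}{\implies}A$, each split being undone by one application of $\implies$ across that edge. The inequality $2^{n-1}-2(n-1)\ge n-1$ for $n\ge5$ and the averaging bound $\sum_{i\in[4]}m_i\le 9$, whence $\min_i m_i\le2$, for $n=4$ are both correct, and the clause ``for some $i\in[4]$'' is automatic since for $n\ge5$ every coordinate works. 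This does prove the statement as literally written.

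Two caveats. First, the aside ``taking $\gamma$ of parity opposite to $\alpha$ keeps both new pairs odd'' is not something you can arrange freely: once a fresh edge is fixed, $\gamma$ is forced to be its endpoint with $\gamma(i)=0$ and its parity is determined; to obtain an \emph{odd} completion you would need a fresh direction-$i$ edge whose bit-$0$ endpoint has the prescribed parity, and in dimension $4$, where your count guarantees only $2$ fresh edges out of $8$ while up to $6$ are blocked, nothing prevents all fresh edges from having the wrong parity. Second, this matters because the bare existence of an $i$-completion is weaker than what the paper actually extracts from this item: in the proof of Proposition~\ref{dimen-4} the completion $B$ is taken with $B\implies A$ in a single step and with $\rho_{i=0}(B)$, $\rho_{i=1}(B)$ being \emph{odd} pair-sets of size at most $2$, so that Proposition~\ref{old-results-prop}(3) applies to them. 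Your argument therefore establishes item (11) as stated, but it could not silently replace the citation without strengthening the conclusion (oddness of $B$ and control of the sizes of its two halves), which is where the substance of \cite[Lemma~3.4]{DGK17} lies.
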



\section{Small dimensions}
We say that pair-sets $A,B\in\allp_n$ are \emph{isomorphic} if there
exists an automorphism $f$ of a hypercube $Q_n$ such that
$B=\{\{f(\alpha),f(\beta)\}\mid \alpha,\beta \in A\}$. 

\begin{figure}[!h]
\centering
\hspace{-1cm}%
\scalebox{1}{\includegraphics{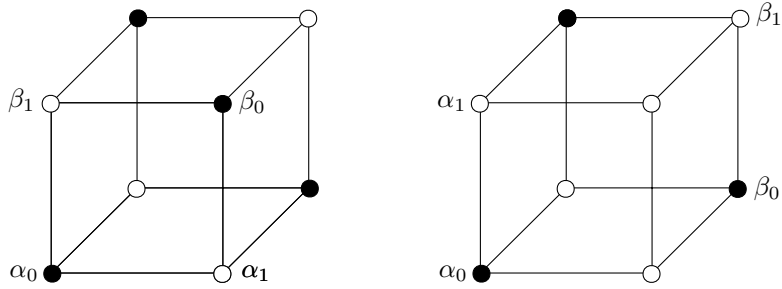}}
\begin{minipage}[t]{.23\linewidth}
\centering
\end{minipage}%
\begin{minipage}[t]{.38\linewidth}
\centering
\end{minipage}
\caption{The only non-connectable balanced pair-sets $C_0=\{\alpha_0\beta_0,\alpha_1\beta_1\}$ and $C_1=\{\alpha_0\beta_0,\alpha_1\beta_1\}$ in $Q_3$\label{fig:Fig1}}
\end{figure}

\begin{proposition}
\label{dimen-3}
Let $A\in\balp_3$. Then
\begin{enumerate}[\upshape(1)]
\item
\label{dimen-3-1}
if $|A|=2$ then $A$ is connectable if and only if $A$ is not isomorphic to the pair-sets $C_0$ and $C_1$
on \Cref{fig:Fig1},
\item
\label{dimen-3-2}
if $A$ is diminishable then $A$ is connectable.
\end{enumerate}
\end{proposition}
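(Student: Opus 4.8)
The plan is to reduce \Cref{dimen-3} to a short, symmetry-aided case check, exploiting the automorphism group of $Q_3$ (of order $48$) to shrink the number of configurations and the results collected in \Cref{old-results-prop} and \Cref{lemma-simple} to dispose of the easy cases. For part~\eqref{dimen-3-1}, I would first observe that since $A$ is balanced with $|A|=2$, each of its two pairs contributes $0$ to $\p(A)$ when it is odd and $\pm2$ when it is even or degenerate. Hence $\p(A)=0$ leaves exactly three possibilities: both pairs are odd; one pair is degenerate and the other even of opposite parity; or both pairs are even, necessarily non-degenerate, of opposite parity. In the first case $A\in\oddp_3^2\subseteq\conp_3$, and in the second case $A=\{\alpha\beta,\gamma\gamma\}\in\balp_3$ is connectable; both follow from \Cref{old-results-prop}. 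Only the third case is genuinely new.

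In that remaining case, since in $Q_3$ every even non-degenerate pair consists of two vertices at distance $2$ of the same weight-parity, $A$ is the union of one pair of even-weight vertices at distance $2$ and one pair of odd-weight vertices at distance $2$, so $\bigcup A$ has four distinct vertices. I would classify such $A$ up to isomorphism by the multiset of the four cross-distances between the two pairs and verify that this multiset is always one of $\{1,1,1,1\}$, $\{1,1,1,3\}$ or $\{1,1,3,3\}$, each of which forms a single orbit of $\mathrm{Aut}(Q_3)$ (of sizes $6$, $24$ and $6$), so it suffices to treat one representative of each. For $\{1,1,1,3\}$, represented by $A=\{\,000\,\,011,\ 100\,\,010\,\}$, the paths $000,001,101,111,011$ and $100,110,010$ form a connector. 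For the other two multisets I would prove non-connectability: as the two endpoints of each pair share a parity, every path of a putative connector has an odd number of vertices, forcing the eight vertices to split as $3+5$; a $3$-vertex path joining two vertices at distance $2$ must pass through a common neighbour, and a short inspection shows that in $C_0$ (multiset $\{1,1,1,1\}$) whichever pair is assigned the $3$-vertex path has its common neighbours occupied by the endpoints of the other pair, while in $C_1$ (multiset $\{1,1,3,3\}$) the remaining vertex is forced to become a leaf, precluding the $5$-vertex path. This identifies $C_0$ and $C_1$ as precisely the non-connectable instances on \Cref{fig:Fig1}.

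For part~\eqref{dimen-3-2}, unfolding the definition of diminishable in dimension $3$ (where $n\ne4$) shows that $A$ is an odd pair-set with either $|A|\le2$, or $|A|=3$ together with at least two edge-pairs and $\enc(A)=\emptyset$. If $|A|\le 2$ then $A$ is a single odd pair or lies in $\oddp_3^2\subseteq\conp_3$ by \Cref{old-results-prop}. If $|A|=3$ then $\bigcup A$ has six vertices and $\enc(A)=\emptyset$ is equivalent to the two uncovered vertices being adjacent; I would split on the number of edge-pairs. If all three are edge-pairs, then by \Cref{lemma-simple} either some coordinate $i$ is separating with $A$ being $i$-complete---in which case $\rho_{i=0}(A)$ and $\rho_{i=1}(A)$ are pair-sets of one or two edge-pairs in $Q_2$, both plainly connectable, so $A\in\conp_3$ by the completion statement of \Cref{lemma-simple}---or the three edge-pairs lie on a $6$-cycle, settled by the explicit $4+2+2$ connector that routes one edge-pair through both uncovered vertices. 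If exactly two pairs are edge-pairs and the third is diametrical, the parity of the path lengths forces the split $2+2+4$; since the two adjacent uncovered vertices have weights differing by one and thus lie on a geodesic between the diametrical endpoints, routing the diametrical pair along that $4$-vertex geodesic while keeping the two edge-pairs as single edges produces a connector.

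The step I expect to be the main obstacle is the genuinely new case of part~\eqref{dimen-3-1}: both the verification that the cross-distance multiset is a complete isomorphism invariant here (so that the three orbits really exhaust all configurations) and the two non-connectability arguments for $C_0$ and $C_1$, on which the whole biconditional hinges. In part~\eqref{dimen-3-2} the analogous difficulty is the $6$-cycle configuration, the one place where \Cref{lemma-simple} yields no separating coordinate and a hands-on construction is unavoidable.
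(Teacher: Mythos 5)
Your treatment of part~\eqref{dimen-3-1} and of the two-edge-pairs-plus-diametrical-pair case of part~\eqref{dimen-3-2} is sound and usefully fills in the ``direct inspection'' that the paper leaves to the reader: the orbit count $6+24+6=36$ for two even non-degenerate pairs, the explicit connector for the $\{1,1,1,3\}$ representative, and the $3+5$ parity arguments against the two bad configurations all check out, as does the forced $2+2+4$ split in the diametrical case. The genuine flaw is in the three-edge-pair case of part~\eqref{dimen-3-2}. After invoking \Cref{lemma-simple}\,\eqref{lemma-simple:separate} you propose to settle the exceptional branch --- three edge-pairs lying on a $6$-cycle --- ``by the explicit $4+2+2$ connector that routes one edge-pair through both uncovered vertices.'' No such connector exists: in the exceptional configuration of \Cref{lemma-simple}\,\eqref{lemma-simple:separate} the three edges use all three directions, and a short check shows the two vertices of $V_3\setminus\bigcup A$ are then \emph{diametrical}, so they cannot be two consecutive internal vertices of any path, and the only admissible size split $4+2+2$ is unrealizable. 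In fact this pair-set is not connectable at all, since both uncovered vertices lie in $\enc(A)$ and $A$ is odd (see the remark following the definition of connectability). As written, your proof therefore asserts the connectability of a provably non-connectable pair-set.

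The repair is exactly what the paper does, and it is already implicit in your own observation that for $|A|=3$ the condition $\enc(A)=\emptyset$ is equivalent to the two uncovered vertices being adjacent: since the exceptional $6$-cycle configuration leaves an antipodal pair uncovered, it violates $\enc(A)=\emptyset$ and hence is never diminishable. The exceptional branch is thus vacuous, and only the separating-coordinate branch --- which you handle correctly via \Cref{lemma-simple}\,\eqref{lemma-simple:completion} --- actually occurs. With that one sentence substituted for the non-existent construction, your argument for part~\eqref{dimen-3-2} coincides with the paper's, and part~\eqref{dimen-3-1} is a more detailed but equivalent version of the paper's appeal to \Cref{old-results-prop} plus direct inspection.
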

\begin{proof}
Case that $|A|\le2$ follows from parts \eqref{old-results-prop-part1} -- \eqref{old-results-prop-part3} of \Cref{old-results-prop} except the case that $A$ consists of two non-degenerated pairs which may be obtained by a direct inspection. If $A$ is a~diminishable pair-set of size three,, then it contains at least two edge-pairs. If $A$ consists of three edge-pairs, then the condition $\enc(A)\ne\emptyset$ from the definition of diminishability impúlies that these pairs do not lie on a 6-cycle. Hence by \Cref{lemma-simple}\,\eqref{lemma-simple:separate}, $Q_3$ may be partitioned into two edge-pairs and a 4-cycle including the third edge pair, which means that $A\in\conp_3$ in this caase. It remains to verify the case that $A$ consists of two edge-pairs and a pair of diametrical vertices, which may be obtained by a direct inspection.
\end{proof}
\begin{proposition}
\label{dimen-4}
Let $A\in\oddp_4$ be a pair-set. Then
\begin{enumerate}[\upshape(1)]
\item
\label{dimen-4-1}
if $|A|\le 3$ then $A$ is connectable if and only if $A$ is not isomorphic to the pair-set $C_2$ on~\Cref{fig:Fig2},
\item
\label{dimen-4-2}
if $|A|=4$, $A$ contains at least three edge pairs and $\enc(A)=\emptyset$, then $A$ is connectable.
\end{enumerate}
\end{proposition}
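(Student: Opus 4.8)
The plan is to prove Proposition~\ref{dimen-4} by splitting along the two parts and reducing to dimension three via the separation/completion machinery of \Cref{lemma-simple}, supplemented by the imported results of \Cref{old-results-prop} and the already-established \Cref{dimen-3}. For part~\eqref{dimen-4-1}, I would first dispose of the small sizes: the cases $|A|\le 2$ follow directly from parts \eqref{old-results-prop-part1}--\eqref{old-results-prop-part3} of \Cref{old-results-prop}, since $A\in\oddp_4^2\subseteq\conp_4$. The substantive case is $|A|=3$. Here $A$ is an odd pair-set of size $n-1=3$, so by \Cref{old-results-prop}\,\eqref{old-results-prop-part13} there exists an $i$-completion $B$ of $A$ for some coordinate $i\in[4]$. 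The idea is then to invoke \Cref{lemma-simple}\,\eqref{lemma-simple:completion}: it suffices to show that both restrictions $\rho_{i=0}(B)$ and $\rho_{i=1}(B)$ lie in $\conp_3$. Each restriction is a pair-set in $Q_3$; I would argue that it is balanced (using \Cref{lemma-simple}\,\eqref{lemma-simple:4} applied in reverse together with \eqref{lemma-simple:3}, or directly from parity bookkeeping) and of size at most~$2$ or~$3$, so that \Cref{dimen-3} applies. The only way connectability can fail is if one of the restrictions is isomorphic to the forbidden $C_0$ or $C_1$; tracing this obstruction back through the completion should correspond exactly to $A$ being isomorphic to the exceptional configuration $C_2$ of \Cref{fig:Fig2}, giving the "if and only if".

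For part~\eqref{dimen-4-2}, I would exploit the hypothesis that $A$ contains at least three edge-pairs and $\enc(A)=\emptyset$. The key structural tool is \Cref{lemma-simple}\,\eqref{lemma-simple:separate}: among the three edge-pairs there is (unless they form a 6-cycle) a separating coordinate $i$ along which $A$ is $i$-complete. Once such a separating $i$ is fixed, the two restrictions $\rho_{i=0}(A)$ and $\rho_{i=1}(A)$ are each pair-sets in $Q_3$, and by separation each contains at least one edge-pair while being strictly smaller than $A$. The plan is to verify that each restriction is either odd of size $\le 2$ (hence in $\conp_3$ by \eqref{old-results-prop-part3}), or balanced and covered by \Cref{dimen-3}, and then reassemble via \Cref{lemma-simple}\,\eqref{lemma-simple:2} or \eqref{lemma-simple:completion}. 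The condition $\enc(A)=\emptyset$ is what rules out the problematic local configurations in each copy of $Q_3$: whenever a restriction would otherwise be non-connectable, the encompassing condition forces an encompassed vertex in $Q_4$, contradicting the hypothesis.

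The main obstacle, I expect, lies in the exceptional cases where the clean separation argument breaks down: when the three edge-pairs lie on a 6-cycle (the excluded case of \Cref{lemma-simple}\,\eqref{lemma-simple:separate}), and when the fourth pair is arranged so that no single coordinate separates the edge-pairs while keeping both restrictions connectable. For these I would fall back on direct analysis, using \Cref{old-results-prop}\,\eqref{old-results-prop-part3a} to handle a restriction consisting of an edge-pair together with two degenerate pairs, and the $\implies$ reduction to merge pairs across coordinates when a single completion does not suffice. Verifying that the hypothesis $\enc(A)=\emptyset$ genuinely excludes every residual non-connectable pattern in $Q_4$ — rather than merely most of them — is the delicate point, and will likely require a short case analysis organized by the number of edge-pairs (exactly three versus four) and by the relative position of the non-edge pair.
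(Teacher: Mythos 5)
Your overall architecture for \Cref{dimen-4} (reduce to $Q_3$ via completions and separating coordinates, then quote \Cref{dimen-3} and \Cref{old-results-prop}) matches the paper's, but for part \eqref{dimen-4-1} the plan has a genuine gap at exactly the point where the real work lies. The completion route via \Cref{old-results-prop}\,\eqref{old-results-prop-part13} and \Cref{lemma-simple}\,\eqref{lemma-simple:completion} only closes the case when there is a coordinate $i$ with $\sigma_i(A)=(n_0,n_1)$ and $n_0,n_1>0$: then both restrictions of the completion have size at most $2$ and \Cref{old-results-prop}\,\eqref{old-results-prop-part3} applies. When every coordinate has $n_0=0$ or $n_1=0$, i.e.\ $\sigma_i(A)\in\{(1,0),(2,0)\}$ up to symmetry, any $i$-completion leaves a restriction of size $3$ in $Q_3$, which is covered by neither \Cref{old-results-prop}\,\eqref{old-results-prop-part3} nor \Cref{dimen-3} (the latter handles only $|A|=2$ or diminishable sets, and the size-$3$ restriction need not be diminishable, nor even odd). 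Your claim that the only way connectability can fail is that one restriction is isomorphic to $C_0$ or $C_1$ and that this traces back to $A$ being isomorphic to $C_2$ is asserted, not proved, and it is not how the obstruction actually manifests. The paper spends most of the proof on these two residual cases: it takes a connector of the two-pair restriction $\rho_{i=0}(A)$ (or of a diminishable set obtained by attaching an auxiliary neighbour $\gamma_2$, or built from a $6$-cycle avoiding $\rho_{i=0}(\alpha_0),\rho_{i=0}(\beta_0)$), splits a connector path at the projection of the crossing vertex, and assembles explicit pair-sets $A''$, $A'''$ on the two sides; it must then separately dispose of the sub-cases where $A'''$ degenerates or is isomorphic to $C_0$, which it does by rerouting to a connector consisting of two paths of length three or by switching to a coordinate $j$ with $\sigma_j(A)=(2,0)$. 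None of this is foreshadowed in your outline, and without it the positive direction of \eqref{dimen-4-1} does not follow.

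For part \eqref{dimen-4-2} your plan is closer to the paper's: separation via \Cref{lemma-simple}\,\eqref{lemma-simple:separate}, the three edge-pairs on a $6$-cycle as the exceptional configuration, and $\enc(A)=\emptyset$ used to exclude encompassed vertices in the restrictions. But again the decisive constructions are missing: when $\rho_{i=0}(A)$ fails to be diminishable because $\enc(\rho_{i=0}(A))=\{\gamma,\gamma'\}$, the paper reroutes one pair through $\gamma,\gamma'$ and uses $\enc(A)=\emptyset$ to guarantee that $\{\gamma,\gamma'\}$ is disjoint from the other side; when $n_0+n_1=3$ it inserts an auxiliary vertex to build a five-pair completion; and in the $6$-cycle case it runs three sub-cases on $\sigma_i(A)$ with explicit $A''$, $A'''$. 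These constructions, rather than the general framework, are the content of the proof, so as written your proposal is a plausible road map rather than a proof.
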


\begin{figure}
\centering
\hspace{-1cm}%
\begin{minipage}[t]{0.5\linewidth}
\centering
\scalebox{1}{\includegraphics{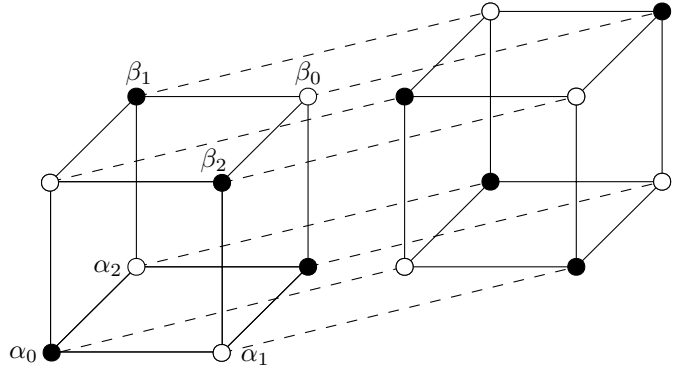}}
\end{minipage}
\caption{The only non-connectable odd pair-set $C_2=\{\alpha_i\beta_i\mid i\in[[3]\}$ in $Q_4$\label{fig:Fig2}}
\end{figure}

\begin{proof}
First note that to verify part \eqref{dimen-4-1}, it suffices to assume that $|A|=3$, as the case that $|A|<3$ follows from parts \eqref{old-results-prop-part1} and \eqref{old-results-prop-part3} of \Cref{old-results-prop}.
By a direct verification we obtain that $C_2$ is not connectable. Assume that $A=\{\alpha_i\beta_i\mid i=0,1,2\}\in\oddp_4$ is not isomorphic to $C_2$. If there
exists $i\in [4]$ such that $\sigma_i(A)=(n_0,n_1)$ and $n_0,n_1>0$ then, by  \Cref{old-results-prop}\,\eqref{old-results-prop-part13}, we find an $i$-completion $B$ of $A$ such that $B\implies A$, $|\rho_{i=0}(B)|,|\rho_{i=1}(B)|\le 2$ and, by Proposition~\ref{old-results-prop}\,\eqref{old-results-prop-part3}, both $|\rho_{i=0}(B)|$ and $|\rho_{i=1}(B)|$ are connectable and thus $A$ is connectable.
Thus we can assume that $n_0=0$ or $n_1=0$. Since $A$ is odd and $n=4$,
there exists $i\in[4]$ such that if $\sigma_i(A)=(n_0,n_1)$ then either $n_0>0$ or $n_1>0$. We can without loss of generaility assume that $n_0>0$. If $n_0=3$ then there must be an edge-pair, for otherwise $A$ is isomorphic to $C_2$, contrary to our assumption.
It follows there exists $j\in[3]$ and $k\in[1]$ such that $0<m_k<3$ for $\sigma_j(\rho_{i=0}(A))=(m_0,m_1)$. Hence we can assume that there exists $i\in[4]$ such that $\sigma_i(A)=(1,0)$ or
$\sigma_i(A)=(2,0)$.

First consider the case that $\sigma_i(A)=(2,0)$ for some $i\in [4]$ and
$\alpha_0(i)=\alpha_2(i)=\alpha_3(i)=\beta_0(i)=\beta_1(i)=0\ne\beta_
2(i)$.  First assume that
$\alpha_0\beta_0\in E(Q_4)$.  By an easy calculation, there exists $
\gamma_2\in V_4$
such that $\alpha_2\gamma_2\in E(Q_4)$, $\gamma_2(i)=0$ and no vertex of
$V_3\setminus\rho_{i=0}(\bigcup A\cup \{\gamma_2\})$ is encompassed by $
\rho_{i=0}(\bigcup A\cup \{\gamma_2\})$.  By
Proposition~\ref{dimen-3}\,\eqref{dimen-3-2},
$$A^{\prime\prime}=\{\rho_{i=0}(\alpha_0)\rho_{i=0}(\beta_0),\rho_{i=0}(\alpha_1)\rho_{i=0}(\beta_1),\rho_{i=0}(\alpha_
2)\rho_{i=0}(\gamma_2)\}$$
is connectable. Then $\chi (\rho_{i=0}(\gamma_2))\ne\chi (\rho_{i
=1}(\beta_2))$ and thus
$$A^{\prime\prime\prime}=\{\rho_{i=0}(\gamma_2)\rho_{i=1}(\beta_2)\}$$
is connectable and hence  $A$ is connectable as well. Thus we can restrict to the case that
$ \alpha_0\beta_0 $, $ \alpha_1\beta_1 \notin E(Q_4)$. 

Set  $\alpha=\rho_{i=0}({\alpha_2})$,  $A'=\rho_{i=0}(A)$ and observe that  by \Cref{old-results-prop}\,\eqref{old-results-prop-part3}, $A'\in\Gamma_3$ and hence there exists a connector $\{P_{\gamma\gamma'}\mid \gamma\gamma'\in A'\}$  of $A'$. Let $P_{\kappa\kappa'}$ be the path passing through $\alpha$. Then there must be a~subpath $(\zeta ,\alpha ,\nu, \zeta')$ of $P_{\kappa\kappa'}$. Without loss of generality assuming that $\zeta$ is closer to $\kappa$ than $\zeta'$ in $P_{\kappa\kappa'}$, set  
\[
A^{\prime\prime}=(A'\setminus \{ \kappa\kappa' \})\cup \{ \kappa\zeta , \alpha\nu , \zeta'\kappa' \}
\]
and note that $A''\in\conp_{n-1}$.
Further set
\[
A^{\prime\prime\prime}=\{ \zeta\zeta' ,\nu\rho_{i=1}(\beta_2)\}
\]
and observe that if $\{\zeta,\zeta' \}\cap\{\nu,\rho_{i=1}(\beta_2)\}=\emptyset$, then $A'''$ is an odd pair-set and therefore connectable by \Cref{old-results-prop}\,\eqref{old-results-prop-part3}. Otherwise it must be the case that $\zeta'=\rho_{i=1}(\beta_2)$. If  $P_{\kappa\kappa'}$ is a path of length $>3$, it must contain a subpath  $(\zeta ,\alpha,\alpha',\alpha'',\nu, \zeta')$. Then $A'''$ is an odd pair-set and therefore $A'''\in\conp_{n-1}$. It remains to deal with the case that $P_{\kappa\kappa'}=(\kappa ,\alpha ,\nu, \kappa')$ while  $\kappa'=\rho_{i=1}(\beta_2)$. Here set  
\[
A^{\prime\prime}=(A'\setminus \{ \kappa\kappa' \})\cup \{\kappa\kappa, \alpha\nu ,\kappa'\kappa'\}
\]
and note that $A''\in\conp_3$. Further set
\[
A^{\prime\prime\prime}=\{\kappa\nu,\alpha\rho_{i=1}(\beta_2)\}
\]
and observe that  as $\kappa\kappa'$ is not an edge-pair by our assumption, we have $\kappa\rho_{i=1}(\beta_2)\not\in E(Q_3)$ while $\kappa\alpha,\alpha\nu\in E(Q_3)$  and therefore $A'''$ is not isomorphic to the sets $C_0, C_1$ on \Cref{fig:Fig1}. Hence $A'''\in\conp_{3}$ by \Cref{dimen-3}\,\eqref{dimen-3-1}.
In any case, since
$\iota_{i,0}(A^{\prime\prime},A^{\prime\prime\prime})\overset{*}{\implies}A$, by Lemma~\ref{lemma-simple} we can conclude that 
$A$ is connectable. Thus if there exists $i\in [4]$ with $\sigma_i(A)=(2,0)$ then $A$ is connectable.

Secondly assume that  $\sigma_i(A)=(1,0)$ for some $i\in[4]$. Let $\alpha_0(i)=\beta_0(i)=\alpha_1(i)=\alpha_2(i)=0\ne\beta_1(i)=\beta_2(i)$. Since $\rho_{i=0}(\alpha_0)\rho_{i=0}(\beta_0)$ is an odd pair in $Q_3$, there exists a cycle $C$ of length $6$ in $Q_3$  such that $\rho_{i=0}(\alpha_0),\rho_{i=0}(\beta_0)\notin V(C)$.   Hence $\{\rho_{i=0}(\alpha_1),\rho_{i=0}(\alpha_2),\gamma_1,\gamma_2,\kappa',\kappa\}= V(C)$ for some $\gamma_1,\gamma_2,\kappa',\kappa\in V_3$  such that
$\alpha_1\gamma_1$, $\alpha_2\gamma_2$, $\kappa\kappa'$ are edges of $C$. Then
$$A^{\prime\prime}=\{\rho_{i=0}(\alpha_0)\rho_{i=0}(\beta_0),\rho_{i=0}(\alpha_1)\gamma_1,\rho_{i=1}(\alpha_2)\gamma_2\}$$ is connectable by Proposition~\ref{dimen-3}\,\eqref{dimen-3-2}. If
$\{\gamma_1,\gamma_2\}\cap \{\rho_{i=1}(\beta_1),\rho_{i=1}(\beta_2)\}
=\emptyset$ then, by Proposition~\ref{old-results-prop}\,\eqref{old-results-prop-part3},
$A^{\prime\prime\prime}=\{\gamma_j\rho_{i=1}(\beta_j)
\mid j=1,2\}$ is a connectable pair-set because $A'''$ is odd and hence $A$ is connectable.  Thus we can restrict to the case that 
$\{\gamma_1,\gamma_2\}\cap \{\rho_{i=1}(\beta_1),\rho_{i=1}(\beta_2)\}\ne\emptyset$. Note that this assumption implies that 
\begin{enumerate}[\upshape(a)]
\item $\chi (\alpha_1)\ne\chi (\alpha_2)$,
\item $\{\rho_{i=0}(\alpha_1),\rho_{i=0}(\alpha_2)\}\ne\{\rho_{i=1}(\beta_1),\rho_{i=1}(\beta_2)\}$,
\item \label{dimen-4-1:c}
$\dist(\alpha_1,\beta_2)=2$ or $\dist(\alpha_2,\beta_1)=2$. 
\end{enumerate}
Set $A'=\rho_{i=0}
(\{ \alpha_0\beta_0 ,\alpha_1\alpha_2\})$, $\hat\alpha_j=\rho_{i=0}({\alpha })$, $\hat\beta_j=\rho_{i=0}({\beta})$ for both $j\in\{1,2\}$ and note that  $A'\in\Gamma_{3}$ by Proposition~\ref{old-results-prop}\,\eqref{old-results-prop-part3} and therefore there is a connector $\{P_{\gamma\gamma'}\mid \gamma\gamma'\in A'\}$ of $A'$.  Then $P_{\hat\alpha_1\hat\alpha_2}$ contains an edge $\zeta\nu$ such that 
$\zeta$ is closer to $\hat\alpha_1$ than $\nu$ on this path and 
\begin{enumerate}[\upshape(A)]
\item
\label{dimen-4-1:A}
$\p(\zeta)\ne\p(\hat\alpha_1)$ and $\{\zeta,\nu\}\cap\{\hat\beta_1,\hat\beta_2\}=\emptyset$; or
\item
\label{dimen-4-1:B}
$\p(\zeta)=\p(\hat\alpha_1)$ and $|\{\zeta,\nu\}\cap\{\hat\beta_1,\hat\beta_2\}|=1$; or
\item
\label{dimen-4-1:C}
$P_{\hat\alpha_1\hat\alpha_2}=(\hat\alpha_1,\hat\alpha_2)=(\zeta,\nu)$ and $\{\zeta,\nu\}\cap\{\hat\beta_1,\hat\beta_2\}=\emptyset$.
\end{enumerate}
Set
\begin{align*}
A^{\prime\prime}&=
(A'\setminus \{\hat\alpha_1\hat\alpha_2\})\cup \{\hat\alpha_1\zeta,\nu\hat\alpha_2\},\\
A^{\prime\prime\prime}&=\{\zeta\hat\beta_1,\nu\hat\beta_2\}.
\end{align*}
Since $\{P_{\gamma\gamma'}\mid \gamma\gamma'\in A'\}$ is a connector of $A'$, we can conclude that in all three cases $A^{\prime\prime}\in\Gamma_{n-1}$. If \eqref{dimen-4-1:A} or \eqref{dimen-4-1:B} occurs,  $A'''$ is an odd pair-set or non-pure balanced pair-set and therefore connectable by parts \eqref{old-results-prop-part3} or \eqref{old-results-prop-part2} of \Cref{old-results-prop}, respectively. If  \eqref{dimen-4-1:C} occurs, $A'''$ is a pure balanced pair-set and therefore connectable by \Cref{dimen-3}\,\eqref{dimen-3-1} unless $A'''$ is isomorphic to one of pair-sets $C_1,C_2$ on \Cref{fig:Fig1}. In all these cases we have
$\iota_{i,0}(A^{\prime\prime},A^{\prime\prime\prime})\overset{*}{\implies}A$ and therefore by Lemma~\ref{lemma-simple} we can conclude that $A\in\Gamma_4$. 

It remains to deal with the case when \eqref{dimen-4-1:C} holds and $A'''$ is isomorphic to one of forbidden configurations on \Cref{fig:Fig1}. Recall that our assumption \eqref{dimen-4-1:c}  implies that then $A'''$ must be isomorphic to the pair-set $C_0$. If follows that $\{\alpha_1,\alpha_2\}$ and $\{\beta_1,\beta_2\}$ are edges on a 6-cycle. Put $D=\{\Delta(\gamma,\gamma')\mid \gamma,\gamma'\in \{\alpha_1,\alpha_2,\beta_1,\beta_2\}\}$, $D'=\Delta(\alpha_0,\beta_0)$ and note that $|D|=3$ and $i\in D\setminus D'$. Hence if $|D'|=3$, then there is $j\in D' \setminus D$ such that $\sigma_j(A)\in\{(2,0), (0,2)\}$ which leads to the case settled above. Otherwise we have $|D'|=1$, i.e. $ \alpha_0\beta_0 $ is an edge-pair. By \Cref{lemma-simple}\,\eqref{lemma-simple:separate} then there is a $j\in[4]$ separating for $\{ \alpha_0\beta_0 , \alpha_1\beta_1 \}$, which means that $ \alpha_0\beta_0 $ and $ \alpha_1\beta_1 $ lie on two disjoint 4-cycles that partition $Q_3$. 
Consequently, in this case there is a~connector of $A'$ consisting of two paths of length three, which leads to cases \eqref{dimen-4-1:A},\eqref{dimen-4-1:B} settled above. The proof of Proposition~\ref{dimen-4}\,\eqref{dimen-4-1} is complete.

To prove Proposition~\ref{dimen-4}\,\eqref{dimen-4-2}, let $A=\{\alpha_j\beta_j\mid j\in[4]\}$ be an odd pair-set in $\oddp_4$ such that $\alpha_j\beta_j$ is an edge pair for $j\in[3]$. First assume that 
\begin{enumerate}[($\ast$)]
\item\label{dimen-4-1:star}
 there is $i\in[4]$ such that
$\alpha_0(i)=\beta_0(i)=\alpha_2(i)=\beta_2(i)=0\ne\alpha_1(i)=\beta_1(i)$.
\end{enumerate}
Let $\sigma_i(A)=(n_0,n_1)$, then $n_0+n_1\in\{3,4\}$.

If $n_0+n_1=4$ then $A$ is $i$-complete. If both $\rho_{i=0}(A)$ and $\rho_{i=1}(A)$ are diminishable, then they are connectable by Proposition~\ref{dimen-3}  and therefore $A\in\conp_4$ by \Cref{lemma-simple}\,\eqref{lemma-simple:completion}. Otherwise it must be the case that $\enc(A')=\{\gamma,\gamma'\}$ where $A'=\rho_{i=0}(A)$. Assuming that $\gamma$ is encompassed by $\rho_{i=0}(\alpha_0),\rho_{i=0}(\alpha_2)$, and  $\rho_{i=0}(\alpha_3)$  (which means that $\gamma'$ is encompassed by the remaining vertices of $\bigcup A'$)
, set
\[
A''=(A'\setminus\{\rho_{i=0}(\alpha_3)\rho_{i=0}(\beta_3)\})\cup\{\rho_{i=0}(\alpha_3)\gamma,\rho_{i=0}(\beta_3)\gamma' \}
\]
and note that then $A''$ consists of four edge-pairs and therefore $A''\in\conp_3$.  Recall that $\enc(A)=\emptyset$ by our assumption and therefore $\{\gamma,\gamma'\}\cap\rho_{i=1}(A)=\emptyset$. It follows that 
\[
A'''=\{\rho_{i=1}(\alpha_1)\rho_{i=1}(\beta_1),\gamma\gamma'\}
\]
is an odd pair set and therefore connectable by \Cref{dimen-3}. Since we have
$\iota_{i,0}(A^{\prime\prime},A^{\prime\prime\prime})\overset{*}{\implies}A$, by Lemma~\ref{lemma-simple} we can conclude that $A\in\Gamma_4$. 

If $n_0+n_1=3$ then we can assume that $\alpha_3(i)=0\ne\beta_3(i)$. Observe that there must be a $\gamma\in V_4\setminus\bigcup A $ such that $\gamma(i)=0$, $\p(\gamma)\ne\p(\alpha_3)$,$\gamma\oplus e_i\not\in A$ and $\enc(\rho_{i=0}(A)\cup\{\gamma\})=\emptyset$. Set 
\[
B=(A\setminus\{\alpha_3\beta_3\})\cup\{\alpha_3\gamma,\gamma\oplus e_i\beta_3\}.
\]
Then $|B|=5$, $\sigma_i(B)=(3,2)$, $B\implies A$, both $\rho_{i=0}(B)$ and $\rho_{i=1}(B)$ are connectable by Proposition~\ref{dimen-3} and hence $A\in\conp_4$.

If  \hyperref[dimen-4-1:star]{($\ast$)} 
does not apply, by \Cref{lemma-simple}\,\eqref{lemma-simple:separate} it must be the case that edge-pairs $\{\alpha_j\beta_j\mid j\in[3]\}$ lie on a~6-cycle while $\dist(\alpha_3,\beta_3)=3$. Then $[4]\setminus\{\Delta(\alpha_j,\beta_j)\mid j\in[3])\}=\{i\}$ and we can assume that $\alpha_j(i)=\beta_j(i)=0$ for all $j\in[3]$. Put $A'=\rho_{i=0}(A)$ and consider three subcases. If $\sigma_i(A)=(4,0)$, set
\begin{align*}
A''&=(A'\setminus\{\rho_{i=0}(\alpha_3)\rho_{i=0}(\beta_3)\})
\cup\{\rho_{i=0}(\alpha_3)\rho_{i=0}(\alpha_3),\rho_{i=0}(\beta_3)\rho_{i=0}(\beta_3)\},\\
A'''&=\{\rho_{i=0}(\alpha_3)\rho_{i=0}(\beta_3)\}
\end{align*}
and note that $A''$ consists of three edge-pairs and two degenerated pairs covering $V_3$ and therefore $A''\in\conp_3$ while $A'''\in\conp_3$ by \Cref{dimen-3}. 

If $\sigma_i(A)=(3,1)$, then $V_3\setminus\bigcup A'=\{\gamma,\gamma'\}$ and we can assume that $\{\rho_{i=0}(\alpha_j)\mid j\in[3]\}$ encompasses $\gamma$ while $\{\rho_{i=0}(\beta_j)\mid j\in[3]\}$ encompasses $\gamma'$. It follows that 
\[
A''=(A'\setminus\{\rho_{i=0}(\alpha_2)\rho_{i=0}(\beta_2)\})\cup\{\rho_{i=0}(\alpha_2)\gamma,\rho_{i=0}(\beta_2)\gamma'\}
\]
is connectable. Since we assumed that $\enc(A)=\emptyset$, it follows that $\{\gamma,\gamma'\}\cap\{\rho_{i=1}(\alpha_3),\rho_{i=1}(\beta_3)\}=\emptyset$ and therefore
\[
A'''=\{\gamma\gamma',\rho_{i=1}(\alpha_3)\rho_{i=1}(\beta_3)\}
\]
is an odd pair-set and therefore $A'''\in\conp_3$ by \Cref{dimen-3}.

If $\sigma_i(A)=(3,0)$, we can assume that $\alpha_3(i)=0\ne\beta_3(i)$ and that $\rho_{i=0}(\alpha_3)$ is encompassed by $\{\rho_{i=0}(\alpha_j)\mid j\in[3]\}$ while $\rho_{i=1}(\beta_3)=\rho_{i=0}(\beta_2)$. Then $V_3\setminus\rho_{i=0}(\bigcup A)$ consists of a sole $\gamma$ which is encompassed by $\{\rho_{i=0}(\beta_j)\mid j\in[3]\}$ and therefore
\[
A''=(A'\setminus\{\rho_{i=0}(\alpha_2)\rho_{i=0}(\beta_2)\})\cup\{\rho_{i=0}(\alpha_2)\rho_{i=0}(\alpha_2),\rho_{i=0}(\beta_2)\gamma,\rho_{i=0}(\alpha_3)\rho_{i=0}(\alpha_3)\}
\]
is connectable. Furthermore,
\[
A'''=\{\rho_{i=0}(\alpha_3)\rho_{i=1}(\beta_3),\gamma\rho_{i=0}(\alpha_2)\}
\]
is a balanced pair-set where $\dist(\rho_{i=0}(\alpha_3),\gamma)=3$ while $\dist(\rho_{i=1}(\beta_3),\rho_{i=0}(\alpha_2))=1$. It follows  that $A'''$ is not isomorphic to any configuration on \Cref{fig:Fig1} and therefore $A'''\in\conp_3$ by \Cref{dimen-3}.

Since in all cases we have
$\iota_{i,0}(A^{\prime\prime},A^{\prime\prime\prime})\overset{*}{\implies}A$, by Lemma~\ref{lemma-simple} we can conclude that $A\in\Gamma_4$ and the proof is complete. 
\end{proof}

It should be noted that the assumptions of Proposition~\ref{dimen-4}\,\eqref{dimen-4-2} are the best possible in the following sense: by a computer search we identified 53 non-connectable pair-sets $A\in\oddp_4$ such that $|A|=4$, $A$ contains two edge pairs and $\enc(A)=\emptyset$.


\begin{thebibliography}{00}


\bibitem{CK}
R.~Caha, V.~Koubek, \emph{Spanning multi-paths in hypercubes}, Discrete Math. 307 (2007), 2053--2066.

\bibitem{CG}
N.~Casta\~neda, I.~Gotchev, \emph{Path coverings with prescribed ends in faulty hypercubes}, Graphs Comb. 31 (2015), 833--869.

\bibitem{CGL}
N.~Casta\~neda, I.~Gotchev, V.~Gotchev,  F.~Latour, \emph{Path coverings with prescribed ends of the $n$-dimensional binary hypercube}, Congr. Numer. 197 (2009), 193--205.

\bibitem{CGGL10}
N.~Casta\~neda, I.~Gotchev, V.~Gotchev,  F.~Latour, \emph{On path coverings of hypercubes with one faulty vertex}, Graph Theory Notes N. Y. LVIII (2010), 42--47.


\bibitem{D}
T.~Dvo\v{r}\'ak, {\em Hamiltonian cycles with prescribed edges
in hypercubes}, SIAM J.~Discrete Math. 19 (2005), 135--144.

\bibitem{DG}
T.~Dvo\v{r}\'ak, P.~Gregor, {\em  Partitions of faulty hypercubes into paths with prescribed endvertices}, SIAM J.~Discrete Math. 22 (2008), 1448--1461.

\bibitem{DGK05}
T.~Dvo\v{r}\'ak, P.~Gregor, V.~Koubek,  {\em Spanning paths in hypercubes}, Discrete Math. Theor.
Comput. Sci. AE (2005), 363--368.

\bibitem{DGK17}
T.~Dvo\v{r}\'ak, P.~Gregor, V.~Koubek,  {\em Generalized Gray codes with prescribed ends}, Theor. Comput. Sci. (2017), \doi{10.1016/j.tcs.2017.01.010}

\bibitem{DK} T.~Dvo\v{r}\'ak, V.~Koubek, \emph{Computational complexity of long paths and cycles in faulty hypercubes}, Theor. Comput. Sci. 411 (2010), 3774--3786.

\bibitem{GD}
P.~Gregor, T.~Dvo\v r\' ak, \emph{Path partitions of hypercubes}, Inf.~Process.~Lett. 108 (2008), 402--406.

\bibitem{H}
I.~Havel, \emph{On hamiltonian circuits and spanning trees of hypercubes}, \v{C}as. P\v{e}st. Mat. 109 (1984), 135--152.

\bibitem{HaMo72}
I. Havel, J. Mor\'avek, \emph{B-valuations of graphs}, Czech. Math. J. 22 (1972), 338-351.


\bibitem{Knuth}
D.~E.~Knuth, {\em The Art of Computer Programming, Volume 4, Fascicle 2: Generating All Tuples and Permutations}, Addison-Wesley Professional, 2005.

\bibitem{LW}
M.~Lewinter, W.~Widulski, \emph{Hyper-Hamilton laceable and caterpillar-spannable product graphs}, Comput. Math. Appl. 34 (1997), 99--104.

\bibitem{Ne74}
L. Nebesk\'y, \emph{On cubes and dichotomic trees}, \v{C}as. P\v{e}st. Mat. 99 (1974), 164-167.

\bibitem{Sa}
C.~Savage, {\em A survey of combinatorial Gray codes}, SIAM Rev. 39 (1997), 605--629.

\end{thebibliography}
\end{document}